\newcommand{\ket}[1]{|{#1} \rangle}
\newcommand{\bra}[1]{{\langle {#1}|}}
\newcommand{\braket}[2]{\langle {#1} | {#2} \rangle}
\renewcommand{\phi}{\varphi}
\newtheorem{theorem}{Theorem}
\newtheorem{lemma}[theorem]{Lemma}
\newtheorem{corollary}[theorem]{Corollary}
\newtheorem{proposition}[theorem]{Proposition}
\begin{document}
\title{Geometry of entanglement in the Bloch sphere}
\author{Michel Boyer}
\email{boyer@iro.umontreal.ca}
\affiliation{D\'epartement IRO, Universit\'e de Montr\'eal,
Montr\'eal (Qu\'ebec) H3C 3J7, Canada}
\author{Rotem Liss}
\email{rotemliss@cs.technion.ac.il}
\author{Tal Mor}
\email{talmo@cs.technion.ac.il}
\affiliation{Computer Science Department, Technion, Haifa 3200003,
Israel}

\begin{abstract}
Entanglement is an important concept in quantum information,
quantum communication, and quantum computing.
We provide a geometrical analysis of entanglement
and separability for \emph{all} the rank-2 quantum mixed states:
complete analysis for the bipartite states,
and partial analysis for the multipartite states.
For each rank-2 mixed state, we define its unique Bloch sphere,
that is spanned by the eigenstates of its density matrix.
We characterize those Bloch spheres into exactly five classes
of entanglement and separability, give examples for each class,
and prove that those are the only classes.
\end{abstract}

\pacs{03.67.Mn, 03.65.Ud}

\maketitle

\section{\label{intro} Introduction}
Entanglement is a very important property of quantum states,
relevant to the foundations of quantum mechanics
(e.g., the Einstein-Podolsky-Rosen paradox and Bell's inequality),
as well as to quantum information, quantum communication
(including quantum teleportation and quantum cryptography),
quantum computers and simulators, and quantum many-body systems.

The relations between entanglement, partial transpose,
and non-classical correlations between the subsystems,
are well-understood for pure quantum bipartite states.
However, for mixed quantum states there are still many open questions.
Even bipartite mixed states of rank $2$
(namely, states that can be written as
$\rho = p \ket{\phi} \bra{\phi} + (1 - p) \ket{\psi} \bra{\psi}$,
where $0 < p < 1$, and $\ket{\phi}, \ket{\psi}$ are
bipartite orthonormal states and are the eigenstates of $\rho$),
that are discussed in this paper, are not well-understood.
Studying such states is thus a major challenge
in the field of mixed-state quantum entanglement.

It is known that if a mixed state does not have a positive
partial transpose then it is entangled
and presents a nonlocal behavior~\cite{Peres1996}.
However, one can find separable states presenting a nonlocal behavior
(e.g.,~\cite{BDFMRSSW99}),
and one can find entangled states that have a positive partial
transpose~\cite{Horodecki1996}; those states are bound entangled,
namely, their entanglement cannot be distilled~\cite{Horodecki1998}.
It was later proved that bound entangled states cannot
have rank $3$ or less~\cite{Horodecki2003,Chen2008}.
Therefore, checking whether a \emph{specific} rank-2 state
is entangled is trivial: it is entangled if and only if
it does not have a positive partial transpose;
however, in this paper we discuss the problem of
classifying each rank-2 state by checking which states
in its \emph{Bloch-sphere neighborhood} (namely, in
its corresponding Bloch sphere) are entangled.

Entanglement distillation (for pure states)~\cite{BBPS96}
and entanglement purification (for mixed states)~\cite{BBPSSW96}
are processes of distilling Bell states (or other maximally entangled
states) from some copies of an initial state.
An efficient protocol is known for pure states,
but not for mixed states.
This provides another motivation for studying and finding ways
to fully characterize the simplest non-pure bipartite states
(the rank-2 bipartite mixed states).

The notion of the Bloch sphere, also known as the Poincar\'e sphere,
is a very useful geometrical interpretation of a single qubit.
It can be extended to any $2$-dimensional (complex) \emph{subspace}
of a full Hilbert space -- for example, the subspace
spanned by the eigenstates of any given rank-2 mixed state.

We define here the ``Bloch-sphere entanglement'' of a quantum rank-2
bipartite state. This (informally) means that we define the sets of
separable states and of entangled states inside the \underline{unique}
Bloch sphere associated with this quantum state.
We provide some examples,
and we prove that the five classes we present exhaust all
the possibilities of ``Bloch-sphere entanglement''.
We briefly discuss going beyond bipartite states,
and we briefly present an interesting exception (from
the above classification) for the case of just two qubits.

We primarily use the Peres-Horodecki
criterion~\cite{Peres1996,Horodecki1996}:
\emph{if} for a state $\rho$ of the system $AB$, the operator
$\rho^{T_B}$ is not positive semidefinite (where $\rho^{T_B}$ is the
partial transpose of $\rho$ with respect to the the subsystem $B$),
\emph{then} $\rho$ is entangled.

It was shown in~\cite{Horodecki1996} that for systems of dimensions
$2 \otimes 2$, $2 \otimes 3$, or $3 \otimes 2$,
$\rho$ is entangled \emph{if and only if} $\rho^{T_B}$ is not positive
semidefinite. Yet in higher dimensions there are entangled states 
(that are bound entangled states) that have a positive
partial transpose~\cite{Horodecki1996,BDMSST1999}.

In Sec.~\ref{weaker_crit} we present a weaker entanglement criterion
that we will use for proving our claims, and in Sec.~\ref{prop_bloch}
we introduce several important properties of Bloch spheres to be used
in our proofs. In Sec.~\ref{classify_ent} we present a classification
of all rank-2 states into five classes, and in Sec.~\ref{proof_central}
we prove that no other classes exist. In Sec.~\ref{two_qubit_proof}
we prove that one of the classes does not exist in a specific case
(the two-qubit case). In Sec.~\ref{multipartite_ent} we generalize
some of our results to multipartite entanglement.
In Sec.~\ref{prev_works} we describe previous works in this area,
and in Sec.~\ref{conclusion} we conclude.

\section{\label{weaker_crit} A weaker entanglement criterion}
We will use this weaker entanglement criterion to prove our claims:
\begin{lemma}\label{PeresWeaker}
Let $\rho^{AB}$ be a state of a bipartite system.
\textbf{If} there are states
$\ket{\phi_A}$, $\ket{\phi_B}$, $\ket{\psi_A}$, and $\ket{\psi_B}$
such that $\bra{\phi_A\phi_B}\rho^{AB}\ket{\phi_A\phi_B} = 0$
and $\bra{\phi_A\psi_B}\rho^{AB}\ket{\psi_A\phi_B} \neq 0$,
\textbf{then} $\rho^{AB}$ is entangled. 
\end{lemma}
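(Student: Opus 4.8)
The plan is to establish the contrapositive by way of the Peres--Horodecki criterion: I will show that the two hypotheses are incompatible with $\rho^{T_B}$ being positive semidefinite, so that $\rho^{AB}$ must be entangled.

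First I would fix the basis of the subsystem $B$ in which the partial transposition is carried out, choosing it so that its first element is $\ket{\phi_B}$ (normalized). For product vectors one then has the elementary identity $\bra{u\, v}\rho^{T_B}\ket{u'\, v'} = \bra{u\, \overline{v'}}\rho^{AB}\ket{u'\, \overline{v}}$, where $\overline{\phantom{v}}$ denotes entry-wise complex conjugation of the $B$-component in this basis; by the choice of basis, $\overline{\phi_B} = \phi_B$. (Whether $\rho^{T_B}$ is positive semidefinite does not depend on which basis is used for the transposition, so this choice is harmless.)

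Next, suppose for contradiction that $\rho^{T_B} \geq 0$. Combining the identity with the first hypothesis gives $\bra{\phi_A \phi_B}\rho^{T_B}\ket{\phi_A\phi_B} = \bra{\phi_A\phi_B}\rho^{AB}\ket{\phi_A\phi_B} = 0$; since a positive semidefinite operator annihilates every vector on which its quadratic form vanishes, $\rho^{T_B}\ket{\phi_A\phi_B} = 0$, hence $\bra{w}\rho^{T_B}\ket{\phi_A\phi_B} = 0$ for every $\ket{w}$. Applying this to $\ket{w} = \ket{\psi_A}\otimes\ket{\overline{\psi_B}}$ and using the identity together with $\overline{\phi_B} = \phi_B$ and the Hermiticity of $\rho^{AB}$, I obtain $0 = \bra{\psi_A\, \overline{\psi_B}}\rho^{T_B}\ket{\phi_A\phi_B} = \bra{\psi_A\, \phi_B}\rho^{AB}\ket{\phi_A\, \psi_B} = \overline{\bra{\phi_A\psi_B}\rho^{AB}\ket{\psi_A\phi_B}}$, contradicting the second hypothesis. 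Therefore $\rho^{T_B}$ is not positive semidefinite, and the Peres--Horodecki criterion yields that $\rho^{AB}$ is entangled.

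The proof is short; the only point requiring care is the bookkeeping of the complex conjugation built into the partial transpose and the basis choice that removes it for $\ket{\phi_B}$. A fully self-contained alternative avoids partial transposes altogether: if $\rho^{AB} = \sum_i p_i\, \ket{a_i}\bra{a_i}\otimes\ket{b_i}\bra{b_i}$ were separable (with $p_i > 0$), the first hypothesis would force $\sum_i p_i\, |\braket{\phi_A}{a_i}|^2\, |\braket{\phi_B}{b_i}|^2 = 0$, hence $\braket{\phi_A}{a_i} = 0$ or $\braket{\phi_B}{b_i} = 0$ for each $i$; but then every summand of $\bra{\phi_A\psi_B}\rho^{AB}\ket{\psi_A\phi_B} = \sum_i p_i\, \braket{\phi_A}{a_i}\braket{a_i}{\psi_A}\braket{\psi_B}{b_i}\braket{b_i}{\phi_B}$ contains the product $\braket{\phi_A}{a_i}\braket{b_i}{\phi_B}$, which vanishes for every $i$, so the whole expression is $0$ --- contradicting the second hypothesis. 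This argument moreover makes clear that the criterion holds in every dimension.
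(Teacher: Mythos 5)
Your proposal is correct. The first argument is essentially the paper's own proof: the identity $\bra{u\,v}\rho^{T_B}\ket{u'\,v'} = \bra{u\,\overline{v'}}\rho^{AB}\ket{u'\,\overline{v}}$ is exactly the relation the paper derives on elementary operators $\ket{i}\bra{j}\otimes\ket{k}\bra{l}$ and extends by linearity, and your step ``a positive semidefinite operator annihilates every vector on which its quadratic form vanishes'' is the paper's Lemma~\ref{zelemma} (stated there in the equivalent form $\bra{\phi}A\ket{\phi}=0 \Rightarrow \bra{\phi}A\ket{\psi}=0$ for all $\ket{\psi}$). The only cosmetic difference is your basis choice making $\overline{\phi_B}=\phi_B$, which shifts where the complex conjugations land but changes nothing substantive; you should still prove (or at least cite) the product-vector identity rather than call it elementary, since it is the technical heart of this route. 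Your second, ``self-contained'' argument is genuinely different from the paper's: it bypasses the partial transpose entirely, writing a putative separable decomposition $\rho^{AB}=\sum_i p_i \ket{a_i}\bra{a_i}\otimes\ket{b_i}\bra{b_i}$, deducing $\braket{\phi_A}{a_i}\braket{\phi_B}{b_i}=0$ for each $i$ from the first hypothesis, and observing that every term of $\bra{\phi_A\psi_B}\rho^{AB}\ket{\psi_A\phi_B}$ contains the factor $\braket{\phi_A}{a_i}\braket{b_i}{\phi_B}$. This is cleaner and more transparent than the partial-transpose route: it works in any dimension without invoking the Peres--Horodecki machinery, and it makes visible why the particular ``crossed'' matrix element $\bra{\phi_A\psi_B}\rho^{AB}\ket{\psi_A\phi_B}$ is the right object to test. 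What the paper's route buys in exchange is the explicit connection to the Peres--Horodecki criterion, which justifies calling the lemma a ``weaker'' version of that criterion.
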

\begin{proof}
Let $\rho = \rho^{AB}$, $\ket{\phi} = \ket{\phi_A\phi_B^*}$,
and $\ket{\psi} = \ket{\psi_A\psi_B^*}$,
where $\ket{\phi_B^*}$ and $\ket{\psi_B^*}$ are obtained from
$\ket{\phi_B}$ and $\ket{\psi_B}$ by
replacing their amplitudes in the standard (computational) basis
by their complex conjugates.

We first need a property of $\rho^{T_B}$.
By definition, the partial transpose of
$C_{ijkl} = \ket{i}\bra{j} \otimes \ket{k}\bra{l}$ is
$C^{T_B}_{ijkl} = \ket{i}\bra{j}\otimes \ket{l}\bra{k}$, and the
partial transpose $\rho^{T_B}$ of $\rho$
is obtained by a linear extension.
Therefore, for $C_{ijkl}$ it holds that
\begin{eqnarray*}
\bra{\phi_A\phi^*_B} C^{T_B}_{ijkl} \ket{\psi_A\psi^*_B}
&=& \braket{\phi_A}{i} \braket{j}{\psi_A}
\braket{\phi^*_B}{l} \braket{k}{\psi^*_B} \\
&=& \braket{\phi_A}{i} \braket{j}{\psi_A}
\braket{\psi_B}{k} \braket{l}{\phi_B} \\
&=& \bra{\phi_A\psi_B} C_{ijkl} \ket{\psi_A\phi_B},
\end{eqnarray*}
and by linearity,
\begin{equation*}
\bra{\phi_A\phi^*_B}\, \rho^{T_B}\, \ket{\psi_A\psi^*_B}
= \bra{\phi_A\psi_B}\, \rho\, \ket{\psi_A\phi_B}.
\end{equation*}

If the condition of the Lemma is satisfied, then
$\bra{\phi_A\phi^*_B}\rho^{T_B}\ket{\phi_A\phi^*_B}
= \bra{\phi_A\phi_B}\rho\ket{\phi_A\phi_B} = 0$ and
$\bra{\phi_A\phi^*_B}\rho^{T_B}\ket{\psi_A\psi^*_B}
= \bra{\phi_A\psi_B}\rho\ket{\psi_A\phi_B} \neq 0$.
From Lemma~\ref{zelemma} it follows that
$\rho^{T_B}$ is not positive semidefinite, and thus
that $\rho$ is entangled.
\end{proof}
We declare this Lemma to be a ``weaker'' criterion because
it proves entanglement only for a subclass of all the states
satisfying the Peres-Horodecki criterion.

\begin{lemma}\label{zelemma}
If a Hermitian operator $A$ is positive semidefinite and
$\bra{\phi}A\ket{\phi} = 0$, then
$\bra{\phi}A\ket{\psi} = 0$ for all $\ket{\psi}$.
\end{lemma}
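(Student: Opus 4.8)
The plan is to exploit that positive semidefiniteness of $A$ turns $(\ket{x},\ket{y}) \mapsto \bra{x}A\ket{y}$ into a positive semidefinite sesquilinear form, and then to conclude by a Cauchy--Schwarz-type bound. First I would record the single fact we are given in usable form: for every vector $\ket{x}$ we have $\bra{x}A\ket{x} \ge 0$. Second, I would establish the inequality $|\bra{\phi}A\ket{\psi}|^2 \le \bra{\phi}A\ket{\phi}\cdot\bra{\psi}A\ket{\psi}$. Third, substituting the hypothesis $\bra{\phi}A\ket{\phi} = 0$ into this bound gives $|\bra{\phi}A\ket{\psi}|^2 \le 0$, hence $\bra{\phi}A\ket{\psi} = 0$, which is exactly the claim.

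The only step requiring an argument is the Cauchy--Schwarz bound, and the mild subtlety is that the form is only semidefinite, not definite, so one cannot simply divide by $\bra{\psi}A\ket{\psi}$. The standard workaround is to pick a phase $e^{i\theta}$ making $e^{i\theta}\bra{\phi}A\ket{\psi}$ real and non-positive, and to consider the real quadratic $t \mapsto \bra{\phi + t e^{i\theta}\psi}\,A\,\ket{\phi + t e^{i\theta}\psi} \ge 0$; since this quadratic in $t$ is non-negative for all real $t$, its discriminant is $\le 0$, and that is precisely the desired inequality.

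A shorter alternative, which I would actually use for the write-up, avoids Cauchy--Schwarz entirely: since $A$ is positive semidefinite and Hermitian, it has a positive semidefinite Hermitian square root $\sqrt{A}$, and then $\bra{\phi}A\ket{\phi} = \|\sqrt{A}\ket{\phi}\|^2 = 0$ forces $\sqrt{A}\ket{\phi} = 0$; consequently $\bra{\phi}A\ket{\psi} = (\sqrt{A}\ket{\phi})^\dagger (\sqrt{A}\ket{\psi}) = 0$ for every $\ket{\psi}$. Either way, the ``main obstacle'' is merely invoking a standard piece of linear algebra (existence of $\sqrt{A}$, or Cauchy--Schwarz for a semidefinite form), so the lemma is essentially immediate.
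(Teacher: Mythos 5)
Your proposal is correct, but it takes a different route from the paper. The paper's proof is the most elementary one available in finite dimensions: it writes the spectral decomposition $A = \sum_i \lambda_i \ket{i}\bra{i}$ with $\lambda_i \ge 0$, observes that $\bra{\phi}A\ket{\phi} = \sum_i \lambda_i |\braket{\phi}{i}|^2 = 0$ forces $\braket{\phi}{i} = 0$ whenever $\lambda_i \ne 0$, and then expands $\bra{\phi}A\ket{\psi}$ term by term. Your preferred write-up via $\sqrt{A}$ is essentially the same fact repackaged (the square root is built from the same spectral decomposition, and $\sqrt{A}\ket{\phi}=0$ encodes exactly the vanishing of the coefficients $\braket{\phi}{i}$ on the support of $A$), so it buys brevity at the cost of quoting a slightly heavier standard result. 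Your Cauchy--Schwarz route is the genuinely different one: it never invokes the spectral theorem and applies verbatim to any positive semidefinite sesquilinear form, hence to positive operators on infinite-dimensional spaces where one would otherwise need more functional-analytic machinery; its only delicate point is the one you flag, namely handling the degenerate case where the leading coefficient $\bra{\psi}A\ket{\psi}$ of your quadratic in $t$ vanishes (there the ``discriminant'' phrasing should be replaced by the observation that a non-negative affine function of $t$ must have zero linear coefficient), and that case is easily dispatched. All three arguments are sound; the paper simply chose the most hands-on one.
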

\begin{proof}
Let $A = \sum_i \lambda_i \ket{i}\bra{i}$ with $\lambda_i\geq 0$;
$\bra{\phi}A\ket{\phi} = \sum_i \lambda_i |\braket{\phi}{i}|^2 = 0$
and thus $\braket{\phi}{i} = 0$ if $\lambda_i \neq 0$.
It follows that $\bra{\phi}A\ket{\psi}
= \sum_i \lambda_i\braket{\phi}{i}\braket{i}{\psi} = 0$
for all $\ket{\psi}$.
\end{proof}

Lemma~\ref{zelemma} was presented by us (MB and TM)
in a conference~\cite{BM2014,*BBM2017}.

\section{\label{prop_bloch} Properties of subspaces and Bloch spheres}
In the next sections, we also use the following results,
that were mentioned in~\cite{Osterloh2008}:
\begin{lemma}\label{lemma_decomp}
Let $\mathcal{H}'$ be a subspace of a Hilbert space $\mathcal{H}$.
Let $\rho \in \mathcal{L}(\mathcal{H}')$
(i.e., $\rho$ can be decomposed as a mixture of
pure states from $\mathcal{H}'$).
If $\rho = \sum_j q_j \ket{\phi_j}\bra{\phi_j}$
is a decomposition of $\rho$ with
$\ket{\phi_j} \in \mathcal{H}$ and $q_j > 0$, then
$\ket{\phi_j} \in \mathcal{H}'$ for all $j$.
\end{lemma}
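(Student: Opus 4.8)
The plan is to reduce everything to the orthogonal projector onto $\mathcal{H}'$. Write $P$ for the orthogonal projection of $\mathcal{H}$ onto $\mathcal{H}'$ and set $Q = I - P$, the projection onto the orthogonal complement of $\mathcal{H}'$. Both $P$ and $Q$ are positive semidefinite, and for any $\ket{\chi} \in \mathcal{H}$ one has $\ket{\chi} \in \mathcal{H}'$ if and only if $Q\ket{\chi} = 0$, which (since $Q = Q^2$) is equivalent to $\bra{\chi} Q \ket{\chi} = \|Q\ket{\chi}\|^2 = 0$. So the whole claim amounts to showing $\bra{\phi_j} Q \ket{\phi_j} = 0$ for every $j$.

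First I would use the hypothesis that $\rho \in \mathcal{L}(\mathcal{H}')$: by definition $\rho = \sum_k r_k \ket{\chi_k}\bra{\chi_k}$ with $\ket{\chi_k} \in \mathcal{H}'$ and $r_k > 0$, so $\bra{\chi_k} Q \ket{\chi_k} = 0$ for all $k$, whence $\tr(Q\rho) = \sum_k r_k \bra{\chi_k} Q \ket{\chi_k} = 0$. Now I would plug in the second decomposition $\rho = \sum_j q_j \ket{\phi_j}\bra{\phi_j}$ and compute the same trace:
\[
0 \;=\; \tr(Q\rho) \;=\; \sum_j q_j\, \bra{\phi_j} Q \ket{\phi_j} \;=\; \sum_j q_j\, \|Q\ket{\phi_j}\|^2 .
\]
This is a sum of nonnegative terms with strictly positive weights $q_j$, so every term vanishes: $\|Q\ket{\phi_j}\|^2 = 0$, i.e.\ $\ket{\phi_j} \in \mathcal{H}'$, for all $j$. (If one prefers, once $\bra{\phi_j} Q \ket{\phi_j} = 0$ is established one can instead invoke Lemma~\ref{zelemma} with the positive semidefinite operator $A = Q$ to get $\bra{\phi_j} Q \ket{\psi} = 0$ for every $\ket{\psi}$, which again gives $Q\ket{\phi_j} = 0$.)

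I do not expect a genuine obstacle here; the one thing to get right is the direction of the argument. Rather than trying to expand each $\ket{\phi_j}$ in a basis adapted to $\mathcal{H}'$ and argue about coefficients, one passes to the complementary projector $Q$, turns ``lies in $\mathcal{H}'$'' into the vanishing of a single nonnegative quantity, and uses the elementary fact that a sum of nonnegative numbers with positive coefficients is zero only when each summand is zero. It is also worth noting that the normalization of the $\ket{\phi_j}$ plays no role, since any rescaling is absorbed into the (still positive) coefficients $q_j$, and the same remark applies to the $\ket{\chi_k}$.
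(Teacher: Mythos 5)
Your proof is correct and is essentially the paper's own argument in coordinate-free form: the paper extends an orthonormal basis of $\mathcal{H}'$ to one of $\mathcal{H}$ and shows $\bra{\psi_i}\rho\ket{\psi_i}=0$ forces the corresponding coefficients of each $\ket{\phi_j}$ to vanish, which is exactly your computation $\tr(Q\rho)=0$ with $Q=\sum_{i\in I\setminus I'}\ket{\psi_i}\bra{\psi_i}$. Your version has the small merit of making explicit why the diagonal elements vanish (via the assumed decomposition over $\mathcal{H}'$), a step the paper leaves implicit.
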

\begin{proof}
Let $\big\{\ket{\psi_i}\big\}_{i\in I'}$ be an orthonormal basis of
$\mathcal{H'}$, and let us extend it to an orthonormal basis
$\big\{\ket{\psi_i}\big\}_{i\in I}$ of $\mathcal{H}$ ($I' \subseteq I$).
Let $\ket{\phi_j} = \sum_{i\in I} a_{ji} \ket{\psi_i}$, with
$a_{ji} = \braket{\psi_i}{\phi_j}$. Then for all $i \in I \setminus I'$,
\begin{equation*}
0 = \bra{\psi_i}\rho\ket{\psi_i} =
\sum_j q_j \braket{\psi_i}{\phi_j}\braket{\phi_j}{\psi_i} =
\sum_j q_j|a_{ji}|^2 ,
\end{equation*}
implying that $a_{ji} = 0$ for all $i \in I \setminus I'$, and thus
$\ket{\phi_j} = \sum_{i\in I'} a_{ji} \ket{\psi_i} \in \mathcal{H}'$.
\end{proof}

\begin{corollary}\label{corollary_decomp}
If a rank-2 mixed state $\rho$ is inside a specific Bloch sphere,
then all the pure states in all of its decompositions lie on
the same Bloch sphere.
\end{corollary}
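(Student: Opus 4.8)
The plan is to reduce the statement to a direct application of Lemma~\ref{lemma_decomp}.
First I would make precise the geometric language: the Bloch sphere attached to a rank-2
mixed state $\rho$ is, by definition, the one spanned by the two eigenstates of $\rho$,
i.e.\ it is the set of (normalized, up to global phase) pure states lying in the
$2$-dimensional complex subspace $\mathcal{H}' \subseteq \mathcal{H}$ that those eigenstates span.
Saying that $\rho$ ``is inside'' this Bloch sphere is then exactly the assertion
$\rho \in \mathcal{L}(\mathcal{H}')$, since $\rho = \lambda_1\ket{e_1}\bra{e_1} + \lambda_2\ket{e_2}\bra{e_2}$
with $\ket{e_1},\ket{e_2} \in \mathcal{H}'$.

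Next, take any decomposition $\rho = \sum_j q_j \ket{\phi_j}\bra{\phi_j}$ with $q_j > 0$ and
$\ket{\phi_j} \in \mathcal{H}$. By Lemma~\ref{lemma_decomp} applied with this $\mathcal{H}'$,
every $\ket{\phi_j}$ lies in $\mathcal{H}'$, hence (after normalization) corresponds to a point
of the Bloch sphere of $\mathcal{H}'$. Since $\mathcal{H}'$ is the same subspace that defines the
original Bloch sphere of $\rho$, all pure states appearing in the decomposition lie on that same
Bloch sphere, which is what had to be shown. The only point needing a word of care is the
uniqueness of the Bloch sphere: the eigenspaces of $\rho$ are determined by $\rho$ (the two
distinct eigenvalues $\lambda_1,\lambda_2$ of a rank-2 state have one-dimensional eigenspaces
inside the support, so their span $\mathcal{H}'$ is intrinsic), so ``the same Bloch sphere'' is
well-defined independently of which decomposition one started from.

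I do not expect a genuine obstacle here; the corollary is essentially a geometric
reformulation of Lemma~\ref{lemma_decomp}, and the bulk of the work is notational — matching
the phrase ``lies on the Bloch sphere'' with ``is a unit vector of $\mathcal{H}'$'' and noting
that $\mathcal{H}'$ depends only on $\rho$, not on the chosen decomposition.
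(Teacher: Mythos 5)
Your proposal is correct and matches the paper's intent: the corollary is stated without a separate proof precisely because it is the direct translation of Lemma~\ref{lemma_decomp} with $\mathcal{H}'$ taken to be the span of the two eigenstates of $\rho$, which is exactly what you do. Your closing remark on the intrinsic nature of $\mathcal{H}'$ is a slight bonus (it anticipates Corollary~\ref{corollary_unique}) but introduces no divergence from the paper's route.
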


By using Corollary~\ref{corollary_decomp}, we get:

\begin{corollary}\label{corollary_unique}
If $\rho$ is a rank-2 mixed state,
then it lies inside a \underline{unique} Bloch sphere
(the uniqueness is up to a possible rotation of the sphere).
\end{corollary}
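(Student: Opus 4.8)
The plan is to show that the two‑dimensional subspace carrying the Bloch sphere is forced to be the support of $\rho$, an object attached intrinsically to $\rho$. Recall that a rank‑$2$ positive semidefinite operator $\rho$ has a well‑defined support $\mathcal{H}_\rho := (\ker\rho)^\perp$, spanned by the eigenstates of $\rho$ with nonzero eigenvalue; since $\rho$ has rank $2$ we have $\dim\mathcal{H}_\rho = 2$, so $\rho$ certainly does lie inside the Bloch sphere of $\mathcal{H}_\rho$ and existence is immediate. For uniqueness, I would first make precise what it means for $\rho$ to lie inside a Bloch sphere: there is a two‑dimensional subspace $\mathcal{H}''$ with $\rho \in \mathcal{L}(\mathcal{H}'')$, and the Bloch sphere in question is the set of pure states supported on $\mathcal{H}''$. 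The task is then to show that $\mathcal{H}''$ is determined by $\rho$ alone.

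Next I would invoke Corollary~\ref{corollary_decomp} — equivalently Lemma~\ref{lemma_decomp} applied with $\mathcal{H}' = \mathcal{H}''$: since $\rho \in \mathcal{L}(\mathcal{H}'')$, every pure state appearing in \emph{any} decomposition of $\rho$ lies in $\mathcal{H}''$. In particular the two orthonormal eigenstates $\ket{\phi},\ket{\psi}$ of $\rho$ lie in $\mathcal{H}''$; being orthonormal they span a two‑dimensional subspace, which must coincide with $\mathcal{H}''$ since $\dim\mathcal{H}'' = 2$. Hence $\mathcal{H}'' = \mathrm{span}\{\ket{\phi},\ket{\psi}\} = \mathcal{H}_\rho$, with no freedom of choice left.

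It remains to account for the phrase ``up to a possible rotation.'' The subspace $\mathcal{H}_\rho$ is unique, but realizing its pure states as the standard Bloch sphere in $\mathbb{R}^3$ requires fixing an orthonormal basis of $\mathcal{H}_\rho$ (the poles); two such choices differ by a unitary on $\mathcal{H}_\rho$, which acts on the sphere as a rotation in $SO(3)$. I do not anticipate a genuine obstacle in this argument; the only point needing a little care is the degenerate case in which $\rho$ has a repeated eigenvalue, so that ``the eigenstates of $\rho$'' is not a single orthonormal pair. But in that case every orthonormal pair in $\mathcal{H}_\rho$ occurs in some decomposition of $\rho$, and all such pairs still span $\mathcal{H}_\rho$, so the conclusion is unchanged.
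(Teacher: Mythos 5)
Your proof is correct and follows essentially the same route as the paper: the paper derives this corollary directly from Corollary~\ref{corollary_decomp} (i.e., Lemma~\ref{lemma_decomp}), concluding that the eigenstates of $\rho$ must lie in any two-dimensional subspace supporting $\rho$, which then forces that subspace to be their span. Your additional remarks on the rotation freedom and the degenerate-eigenvalue case are sound elaborations of the same argument.
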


\begin{corollary}\label{corollary_separable}
If a rank-2 mixed state $\rho$ is separable, then there exist
at least two different pure separable states on its unique Bloch sphere.
\end{corollary}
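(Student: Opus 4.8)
The plan is to combine the definition of separability with Corollary~\ref{corollary_decomp} together with a simple rank count. First I would use separability of $\rho$ directly: by definition there is a decomposition $\rho = \sum_j q_j \ket{\phi_j}\bra{\phi_j}$ into pure \emph{product} states, with all $q_j > 0$ and each $\ket{\phi_j}$ of product form $\ket{\alpha_j}\otimes\ket{\beta_j}$. (Here I only need the existence of \emph{one} such decomposition, which is exactly what ``$\rho$ is separable'' means.)

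Next I would apply Corollary~\ref{corollary_decomp} -- equivalently, Lemma~\ref{lemma_decomp} with $\mathcal{H}'$ taken to be the two-dimensional subspace spanned by the eigenstates of $\rho$, i.e.\ the subspace carrying its unique Bloch sphere. Since $\rho \in \mathcal{L}(\mathcal{H}')$ and the $q_j$ are strictly positive, every $\ket{\phi_j}$ appearing in the above decomposition must lie in $\mathcal{H}'$, hence on the unique Bloch sphere of $\rho$. Thus all the $\ket{\phi_j}$ are pure separable states situated on that Bloch sphere.

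It then remains to check that among these separable states at least two are \emph{different}, i.e.\ correspond to two distinct points of the sphere (two non-proportional vectors). This is where the rank-2 hypothesis enters: if all the $\ket{\phi_j}$ were proportional to a single vector $\ket{\phi}$, then $\rho = \big(\sum_j q_j\big)\,\ket{\phi}\bra{\phi}$ would have rank $1$, contradicting that $\rho$ is genuinely rank $2$. Hence the decomposition contains at least two non-proportional product vectors, giving at least two different pure separable states on the Bloch sphere, which proves the claim. I do not expect a real obstacle here; the only step requiring a moment's care is this last one, where ``different pure states'' must be read as ``distinct points of the Bloch sphere'' and the degenerate possibility of a rank-$1$ $\rho$ disguised by a decomposition with repeated rays must be excluded -- which is precisely what the rank count does.
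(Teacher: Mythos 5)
Your proof is correct and follows essentially the same route the paper intends: the paper states this corollary without proof as an immediate consequence of Lemma~\ref{lemma_decomp} (via Corollary~\ref{corollary_decomp}) applied to a product-state decomposition of the separable $\rho$, with the rank-2 hypothesis ruling out the degenerate case of a single ray. Your explicit handling of that last rank-count step is exactly the (unstated) reason the corollary requires two \emph{different} separable points rather than just one.
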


\section{\label{classify_ent} Classification of Bloch-sphere entanglement}
In the rest of this paper we use
Lemma~\ref{PeresWeaker} (a ``weaker entanglement criterion''),
Lemma~\ref{zelemma} (a ``positive semidefinite operators condition''),
Corollary~\ref{corollary_unique} (the ``unique-Bloch-sphere
corollary''), and Corollary~\ref{corollary_separable}
(a ``separable states condition'')
in order to provide a classification of Bloch-sphere entanglement.
This is based on the following understanding:
if $\rho$ is a bipartite rank-2 mixed state that is a mixture of
pure states in the Hilbert space $\mathcal{H}_A \otimes \mathcal{H}_B$,
then according to Corollary~\ref{corollary_unique}, it lies inside
a unique Bloch sphere (the uniqueness is up to a possible rotation);
and this Bloch sphere corresponds to a $2$-dimensional subspace
of $\mathcal{H}_A \otimes \mathcal{H}_B$.

We present five different classes of $2$-dimensional subspaces
of a bipartite system, that are distinguished
by their Bloch-sphere entanglement:
(It is sufficient to consider only examples for which $\mathcal{H}_A$
is $2$-dimensional ($\mathcal{H}_2$) and $\mathcal{H}_B$ is either
$2$-dimensional ($\mathcal{H}_2$) or $3$-dimensional ($\mathcal{H}_3$).)
\begin{enumerate}
\item \label{class_sep} No entanglement at all

Example in $\mathcal{H}_2 \otimes \mathcal{H}_2$:
$\text{Span}\{\ket{00}, \ket{01}\}$ (Fig.~\ref{fig_sep})

\item \label{class_line_ortho} Entanglement everywhere on and inside
the sphere except a \underline{line} (of separable states) connecting
two \emph{orthogonal} pure states on the sphere (e.g., the poles)

Example in $\mathcal{H}_2 \otimes \mathcal{H}_2$:
$\text{Span}\{\ket{00}, \ket{11}\}$ (Fig.~\ref{fig_line_ortho})

\item \label{class_line_nonortho} Entanglement everywhere on and inside
the sphere except a \underline{line} (of separable states) connecting
two \emph{non-orthogonal} pure states on the sphere

Example in $\mathcal{H}_2 \otimes \mathcal{H}_2$:
$\text{Span}\{\ket{00}, \ket{++}\}$ (Fig.~\ref{fig_line_nonortho})

\item \label{class_point} Entanglement everywhere on and inside
the sphere except a \underline{single separable point} on the sphere

Example in $\mathcal{H}_2 \otimes \mathcal{H}_2$:
$\text{Span}\{\ket{00}, \alpha \ket{01} + \beta \ket{10}\}$
with $\alpha \beta \ne 0$ (Fig.~\ref{fig_point} and
Proposition~\ref{prop_class4_example})

\item \label{class_entang} Entanglement everywhere
(``completely entangled subspace'')

Example in $\mathcal{H}_2 \otimes \mathcal{H}_3$:
$\text{Span}\{[\ket{00}+\ket{11}]/\sqrt{2},
[\ket{02}+\ket{10}]/\sqrt{2}\}$ (Fig.~\ref{fig_entang} and
Proposition~\ref{prop_class5_example})

\underline{Does not exist} in $\mathcal{H}_2 \otimes \mathcal{H}_2$.
(Proof is given in Sec.~\ref{two_qubit_proof},
as Proposition~\ref{prop_two_qubit}.)
\end{enumerate}

Very similar examples can be found in all the bipartite Hilbert spaces
(if the dimensions of both subsystems are at least $2$),
except the example to Class~\ref{class_entang}, that does not exist in
$\mathcal{H}_2 \otimes \mathcal{H}_2$.

The analysis of Classes~\ref{class_sep}-\ref{class_line_nonortho}
(see Figures~\ref{fig_sep}-\ref{fig_line_nonortho})
is very simple and follows directly
from the proof of the general Theorem~\ref{central_theorem}.
Generally speaking, if two pure separable states exist
on the Bloch sphere, then it belongs to one of those classes.

\begin{figure}
\includegraphics{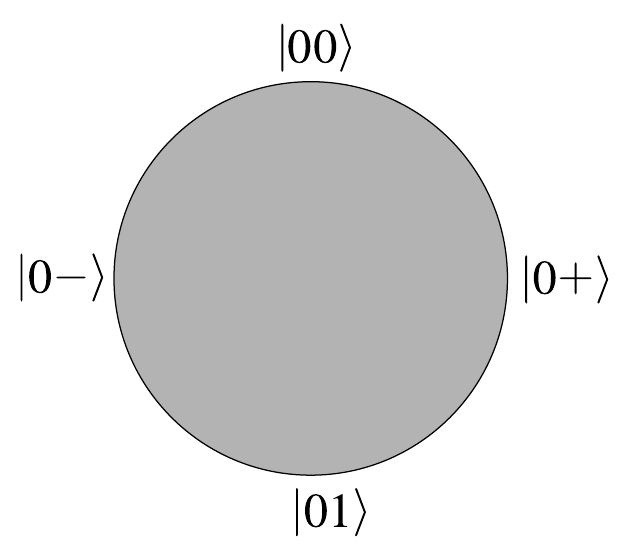}
\caption{\label{fig_sep}\textbf{Bloch sphere of the example for
Class~\ref{class_sep}}: all the states on and inside this Bloch sphere
are separable.}
\end{figure}

\begin{figure}
\includegraphics{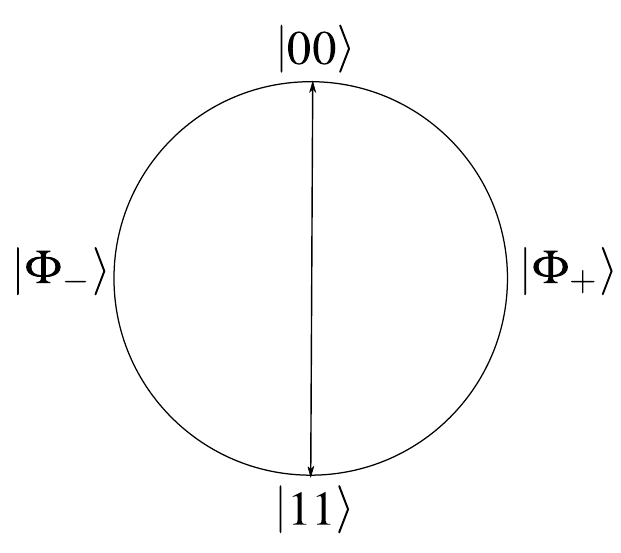}
\caption{\label{fig_line_ortho}\textbf{Bloch sphere of the example for
Class~\ref{class_line_ortho}}: all the states along the line connecting
$\ket{00}$ and $\ket{11}$ are separable;
all the other states on and inside this Bloch sphere are entangled.
Any two orthogonal product states
can replace $\ket{00}$ and $\ket{11}$.}
\end{figure}

\begin{figure}
\includegraphics{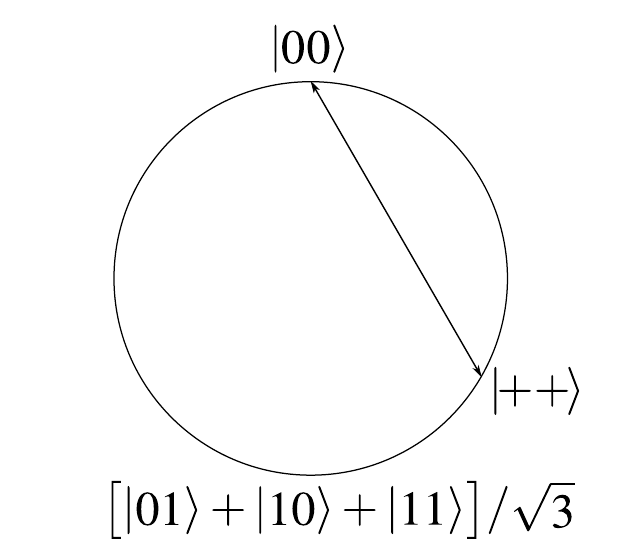}
\caption{\label{fig_line_nonortho}\textbf{Bloch sphere of the example
for Class~\ref{class_line_nonortho}}: all the states along the line
connecting $\ket{00}$ and $\ket{++}$ are separable;
all the other states on and inside this Bloch sphere are entangled.
Any two non-orthogonal linearly independent product states
can replace $\ket{00}$ and $\ket{++}$.}
\end{figure}

We now analyze the example for Class~\ref{class_point}
(see Fig.~\ref{fig_point}), a class that we found,
yet was also found independently by~\cite{Regula2016}.
\begin{proposition}\label{prop_class4_example}
Let $\ket{\psi_0} = \ket{00}$ and
$\ket{\psi_1} = \alpha\ket{01} + \beta\ket{10}$ with
$\alpha\beta \neq 0$, $|\alpha|^2 + |\beta|^2 = 1$.
The state
$\rho = a_{00}\ket{\psi_0}\bra{\psi_0} + a_{01}\ket{\psi_0}\bra{\psi_1}
+ a_{10}\ket{\psi_1}\bra{\psi_0} + a_{11}\ket{\psi_1}\bra{\psi_1}$
is separable if and only if $a_{01} = a_{10} = a_{11} = 0$.
\end{proposition}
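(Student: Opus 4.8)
The plan is to handle the two directions separately, with the forward (``if'') direction being immediate and the ``only if'' direction treated in its contrapositive form through the weaker entanglement criterion of Lemma~\ref{PeresWeaker}. If $a_{01}=a_{10}=a_{11}=0$, then $\rho = a_{00}\ket{00}\bra{00}$ is (a multiple of) a pure product state and hence separable, so that direction needs nothing further. For the converse I would prove: whenever $(a_{01},a_{10},a_{11})\neq(0,0,0)$, the state $\rho$ is entangled. I would first dispose of the degenerate possibility $a_{11}=0$: since $\ket{\psi_0}$ and $\ket{\psi_1}$ are orthonormal, $\bra{\psi_1}\rho\ket{\psi_1}=a_{11}$, so $a_{11}=0$ together with Lemma~\ref{zelemma} (applied to the positive semidefinite operator $\rho$ with $\ket{\phi}=\ket{\psi_1}$) forces $\bra{\psi_1}\rho\ket{\psi}=0$ for all $\ket{\psi}$, hence $a_{10}=\bra{\psi_1}\rho\ket{\psi_0}=0$ and, by Hermiticity of $\rho$, $a_{01}=0$ as well; this contradicts the hypothesis, so it is enough to treat the case $a_{11}\neq 0$.

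Assuming $a_{11}\neq 0$, I would apply Lemma~\ref{PeresWeaker} with the explicit choice $\ket{\phi_A}=\ket{\phi_B}=\ket{1}$ and $\ket{\psi_A}=\ket{\psi_B}=\ket{0}$ (these make sense whether $\mathcal{H}_B$ is $\mathcal{H}_2$ or $\mathcal{H}_3$). The first hypothesis of that lemma holds because both $\ket{\psi_0}=\ket{00}$ and $\ket{\psi_1}=\alpha\ket{01}+\beta\ket{10}$ are orthogonal to $\ket{11}$, so every term of $\rho$ is annihilated and $\bra{11}\rho\ket{11}=0$. For the second hypothesis, a direct expansion using $\braket{\psi_1}{01}=\overline{\alpha}$, $\braket{10}{\psi_1}=\beta$, and $\braket{\psi_0}{01}=\braket{10}{\psi_0}=0$ gives
\begin{equation*}
\bra{\phi_A\psi_B}\rho\ket{\psi_A\phi_B}=\bra{10}\rho\ket{01}=a_{11}\,\overline{\alpha}\,\beta\neq 0,
\end{equation*}
because $\alpha\beta\neq 0$. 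Lemma~\ref{PeresWeaker} then yields that $\rho$ is entangled, which completes the contrapositive.

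The main obstacle I anticipate is simply choosing the right test vectors: the naive candidates built out of $\ket{\psi_0}$ and $\ket{\psi_1}$ themselves make the quantity $\bra{\phi_A\psi_B}\rho\ket{\psi_A\phi_B}$ vanish, so one has to notice that $\ket{11}$ — the (essentially unique) product vector orthogonal to $\text{Span}\{\ket{\psi_0},\ket{\psi_1}\}$ — kills the diagonal term, while the ``crossed'' vectors $\ket{10}$ and $\ket{01}$ keep the relevant off-diagonal matrix element of $\rho$ alive exactly when $a_{11}\neq 0$; the rest is routine. As an alternative to the explicit witness one could argue via Corollary~\ref{corollary_separable}: a rank-$2$ separable $\rho$ would require two linearly independent product states on its Bloch sphere, yet the only product state (up to phase) in $\text{Span}\{\ket{00},\alpha\ket{01}+\beta\ket{10}\}$ is $\ket{00}$, since a superposition $c_0\ket{00}+c_1\alpha\ket{01}+c_1\beta\ket{10}$ has a rank-one coefficient matrix only if $c_1^2\alpha\beta=0$. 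I would still prefer the first route, as it also covers the rank-$1$ case uniformly without invoking the Bloch-sphere corollaries.
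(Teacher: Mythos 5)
Your proposal is correct and follows essentially the same route as the paper's own proof: the case $a_{11}\neq 0$ is handled by Lemma~\ref{PeresWeaker} with exactly the same test vectors ($\bra{11}\rho\ket{11}=0$ and $\bra{10}\rho\ket{01}=a_{11}\beta\alpha^*$), and the case $a_{11}=0$ by Lemma~\ref{zelemma} forcing $a_{10}=a_{01}=0$. You merely spell out the trivial ``if'' direction and the contrapositive bookkeeping a bit more explicitly than the paper does.
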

\begin{proof}
$\bra{11}\rho\ket{11} = 0$ and
$\bra{10}\rho\ket{01} = a_{11}\braket{10}{\psi_1}\braket{\psi_1}{01}
= a_{11}\beta\alpha^*$;
thus, by Lemma~\ref{PeresWeaker}
(the ``weaker entanglement criterion''),
$\rho$ is entangled if $a_{11} \neq 0$. 
If $a_{11} = 0$, $\bra{\psi_1}\rho\ket{\psi_1} = 0$ and
$\bra{\psi_1}\rho\ket{\psi_0} = a_{10}$;
therefore, by Lemma~\ref{zelemma}, $a_{10} = 0$,
which implies that $a_{01} = a_{10}^* = 0$.
\end{proof}

\begin{figure}
\includegraphics{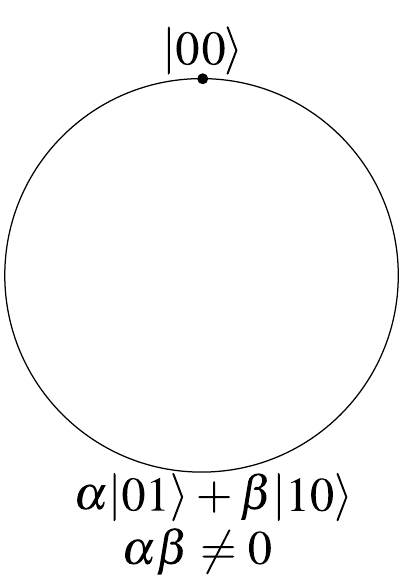}
\caption{\label{fig_point}\textbf{Bloch sphere of the example for
Class~\ref{class_point}}: only the state $\ket{00}$ is separable;
all the other states on and inside this Bloch sphere are entangled.}
\end{figure}

Finally, for the example of Class~\ref{class_entang}
(see Fig.~\ref{fig_entang}):
\begin{proposition}\label{prop_class5_example}
Let $\ket{\psi_0} = (\ket{00}+\ket{11})/\sqrt{2}$ and
$\ket{\psi_1} = (\ket{02}+\ket{10})/\sqrt{2}$.
The state
$\rho = a_{00}\ket{\psi_0}\bra{\psi_0} + a_{01}\ket{\psi_0}\bra{\psi_1}
+ a_{10}\ket{\psi_1}\bra{\psi_0} + a_{11}\ket{\psi_1}\bra{\psi_1}$
is \underline{always} entangled.
\end{proposition}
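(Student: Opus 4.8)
The plan is to apply Lemma~\ref{PeresWeaker} (the weaker entanglement criterion) once more, exactly as in the proof of Proposition~\ref{prop_class4_example}, but now exploiting the fact that $\mathrm{Span}\{\ket{\psi_0},\ket{\psi_1}\}$ contains no product state at all, so that \emph{every} admissible choice of the coefficients $a_{ij}$ yields an entangled $\rho$.

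First I would record two elementary facts. The product states $\ket{01}$ and $\ket{12}$ are each orthogonal to both $\ket{\psi_0}$ and $\ket{\psi_1}$, hence $\bra{01}\rho\ket{01} = \bra{12}\rho\ket{12} = 0$ for any $\rho$ of the stated form (positivity of $\rho$ is not needed, since $\rho$ is just a linear combination of the operators $\ket{\psi_i}\bra{\psi_j}$). A short computation in the computational basis then gives $\bra{00}\rho\ket{11} = a_{00}/2$ and $\bra{10}\rho\ket{02} = a_{11}/2$, because among the four terms of $\rho$ only $a_{00}\ket{\psi_0}\bra{\psi_0}$ contributes to the first matrix element and only $a_{11}\ket{\psi_1}\bra{\psi_1}$ to the second.

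Then I would split into two cases. If $a_{00}\neq 0$, apply Lemma~\ref{PeresWeaker} with $\ket{\phi_A} = \ket{0}$, $\ket{\phi_B} = \ket{1}$, $\ket{\psi_A} = \ket{1}$, $\ket{\psi_B} = \ket{0}$: then $\bra{\phi_A\phi_B}\rho\ket{\phi_A\phi_B} = \bra{01}\rho\ket{01} = 0$ and $\bra{\phi_A\psi_B}\rho\ket{\psi_A\phi_B} = \bra{00}\rho\ket{11} = a_{00}/2 \neq 0$, so $\rho$ is entangled. If $a_{11}\neq 0$, apply Lemma~\ref{PeresWeaker} with $\ket{\phi_A} = \ket{1}$, $\ket{\phi_B} = \ket{2}$, $\ket{\psi_A} = \ket{0}$, $\ket{\psi_B} = \ket{0}$: then $\bra{\phi_A\phi_B}\rho\ket{\phi_A\phi_B} = \bra{12}\rho\ket{12} = 0$ and $\bra{\phi_A\psi_B}\rho\ket{\psi_A\phi_B} = \bra{10}\rho\ket{02} = a_{11}/2 \neq 0$, so again $\rho$ is entangled. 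Since $\rho$ is a density matrix, $a_{00} + a_{11} = \tr\rho = 1$, so $a_{00}$ and $a_{11}$ cannot both vanish and one of the two cases always applies.

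I do not expect a serious obstacle here; the only thing to watch is getting the index pattern in Lemma~\ref{PeresWeaker} right (which subsystem carries $\phi$ and which carries $\psi$ in the off-diagonal matrix element) and checking the two computational-basis identities. As a conceptual check on why this must work, note that writing $c_0\ket{\psi_0} + c_1\ket{\psi_1}$ as a $2\times 3$ matrix over the computational basis, its $2\times 2$ minors are, up to constants, $c_0^2$ and $c_1^2$; hence the only vector of rank $\le 1$, i.e.\ the only product vector, in the span is $0$. Thus no pure product state lies on the Bloch sphere, so by Corollary~\ref{corollary_separable} no rank-$2$ state with this eigenbasis is separable, while any rank-$1$ such state is a manifestly entangled pure state. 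This is the structural reason behind the Lemma~\ref{PeresWeaker} argument above, and it also explains why (unlike in Proposition~\ref{prop_class4_example}) there is not even one separable point.
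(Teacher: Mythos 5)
Your proof is correct, but it takes a genuinely different route from the paper's. The paper proves the stronger structural fact that the span of $\ket{\psi_0}$ and $\ket{\psi_1}$ contains \emph{no} product vector at all (by writing a generic $\alpha\ket{\psi_0}+\beta\ket{\psi_1}$ as $\sum_{ij}\epsilon_{ij}\ket{i}\ket{j}$ and showing the rank-one conditions $\epsilon_{01}=\epsilon_{12}=0$ force $\alpha=\beta=0$), and then invokes Corollary~\ref{corollary_separable} to pass from ``no separable pure states on the sphere'' to ``no separable mixed states in the ball.'' You instead apply Lemma~\ref{PeresWeaker} directly to the mixed state, with the explicit witness pairs $(\ket{01};\,\bra{00}\rho\ket{11}=a_{00}/2)$ and $(\ket{12};\,\bra{10}\rho\ket{02}=a_{11}/2)$, closing the case split via $a_{00}+a_{11}=\tr\rho=1$. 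I checked your matrix elements and the index pattern of Lemma~\ref{PeresWeaker} in both cases; they are right, and the orthogonality of $\ket{01}$ and $\ket{12}$ to the whole subspace is immediate. Your approach buys concrete, uniform entanglement witnesses for every point of the Bloch ball (pure or mixed) without routing through Corollary~\ref{corollary_separable}, and it mirrors the style of the paper's proof of Proposition~\ref{prop_class4_example}; the paper's approach buys the cleaner conceptual statement that the subspace is completely entangled, which is really the content of Class~\ref{class_entang}, with the mixed-state claim then following from general machinery. Your closing ``conceptual check'' via the $2\times 3$ matrix minors is essentially the paper's argument in coordinate-free form, so you have in effect supplied both proofs.
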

\begin{proof}
By using Corollary~\ref{corollary_separable}
(the ``separable states condition''),
it is sufficient to prove that all the pure states
$\ket{\psi} = \alpha \ket{\psi_0} + \beta \ket{\psi_1}$
are entangled.

Let us look at the state
\begin{eqnarray*}
\ket{\psi} &=& \alpha \ket{\psi_0} + \beta \ket{\psi_1} \\
&=& \frac{\alpha}{\sqrt{2}} \ket{00}
+ \frac{\alpha}{\sqrt{2}} \ket{11}
+ \frac{\beta}{\sqrt{2}} \ket{02}
+ \frac{\beta}{\sqrt{2}} \ket{10} \\
&\triangleq& \sum_{i,j} \epsilon_{ij} \ket{i}\ket{j}.
\end{eqnarray*}
For this state to be separable, there must exist
$a_0, a_1$ and $b_0, b_1, b_2$ such that
$\epsilon_{ij} = a_i b_j$ for all $i,j$;
hence, the equations $\epsilon_{01} = a_0 b_1 = 0$ and
$\epsilon_{12} = a_1 b_2 = 0$ must hold.
By a simple calculation it follows that necessarily
$\alpha = \beta = 0$, which is impossible.
We conclude that there are no separable pure states on the Bloch sphere,
and thus there are no separable mixed states.
\end{proof}

\begin{figure}
\includegraphics{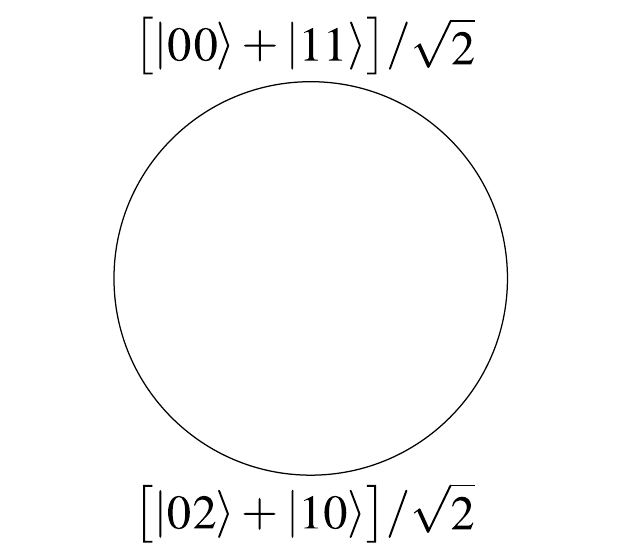}
\caption{\label{fig_entang}\textbf{Bloch sphere of the example for
Class~\ref{class_entang}}: all the states on and inside this Bloch
sphere are entangled.}
\end{figure}

Our classification suggests natural ways to measure entanglement
inside the Bloch sphere: for example, entanglement may be measured
by the Euclidean distance to the closest separable state (e.g.,
given the Bloch sphere $\text{Span}\{\ket{00}, \ket{11}\}$,
the closest separable state to the pure state
$\alpha \ket{00} + \beta \ket{11}$
is the state
$|\alpha|^2 \ket{00}\bra{00} + |\beta|^2 \ket{11}\bra{11}$).
We note that this entanglement measure,
unlike the measures analyzed by~\cite{Lohmayer2006,Osterloh2008},
vanishes only for separable states.
Analyzing the properties of such measures
is beyond the scope of this paper.

\section{\label{proof_central} A proof that there are
exactly five classes of ``Bloch-sphere entanglement''}
Our main goal is to provide a full analysis of the
general bipartite case. We prove that the classes we found
are the \underline{only} classes that exist in the bipartite case,
for all the rank-2 bipartite states
(namely, for all the corresponding $2$-dimensional Hilbert spaces):

\begin{theorem}\label{central_theorem}
Let $\mathcal{H}$ be a $2$-dimensional subspace of
$\mathcal{H}_A \otimes \mathcal{H}_B$,
where $\mathcal{H}_A$ and $\mathcal{H}_B$ are two Hilbert spaces.
Then $\mathcal{H}$ belongs to one of the following classes:
\begin{description}
	\item[Class~\ref{class_sep}] The Bloch ball of $\mathcal{H}$ is
	\underline{completely separable}.
	\item[Classes~\ref{class_line_ortho}+\ref{class_line_nonortho}]
	The Bloch ball of $\mathcal{H}$ has
	\underline{one line} of separable states,
	and all the other states are entangled.
	\item[Class~\ref{class_point}] The Bloch ball of $\mathcal{H}$ has
	\underline{one separable point} (pure state),
	and all the other states are entangled.
	\item[Class~\ref{class_entang}] The Bloch ball of $\mathcal{H}$ is
	\underline{completely entangled}.
\end{description}
\end{theorem}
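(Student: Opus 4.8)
The plan is to reduce the whole classification to the geometry of a single set: the collection $S$ of product (Schmidt-rank-one) pure states lying on the Bloch sphere of $\mathcal{H}$. First I would fix an orthonormal basis $\ket{\psi_0},\ket{\psi_1}$ of $\mathcal{H}$ and use the standard identification of $\mathcal{H}_A\otimes\mathcal{H}_B$ with the space of rectangular matrices to write $\ket{\psi_0}\leftrightarrow M_0$ and $\ket{\psi_1}\leftrightarrow M_1$. A normalized pure state on the Bloch sphere is $\ket{\psi(\alpha,\beta)}=\alpha\ket{\psi_0}+\beta\ket{\psi_1}\leftrightarrow\alpha M_0+\beta M_1$ (with $[\alpha:\beta]\in\mathbb{CP}^1$), and it is a product state if and only if $\mathrm{rank}(\alpha M_0+\beta M_1)\le1$, i.e.\ if and only if every $2\times2$ minor of the pencil $\alpha M_0+\beta M_1$ vanishes. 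Thus $S$ is the common zero locus in $\mathbb{CP}^1$ of a family of binary quadratic forms, one per $2\times2$ minor.

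The key step is to show that $S$ can take only four shapes. Let $V$ be the linear span of these minor forms inside the three-dimensional space of binary quadratics. If $V=\{0\}$, then $\mathrm{rank}(\alpha M_0+\beta M_1)\le1$ for all $(\alpha,\beta)$ and $S=\mathbb{CP}^1$. If $\dim V=1$, say $V=\mathbb{C}q$ with $q\neq0$, then $S=\{q=0\}$ consists of two points or of one point (a double root). If $\dim V\ge2$, pick two non-proportional forms in $V$: two distinct binary quadratics share at most one projective root, so $|S|\le1$ (and possibly $S=\varnothing$). Hence $|S|\in\{0,1,2\}$ or $S=\mathbb{CP}^1$ --- geometrically, the line $\mathbb{P}(\mathcal{H})$ either lies inside the variety of product states, is tangent to it, is a secant of it, or misses it. This dichotomy (there is never an intermediate ``curve'' of product states) is exactly where one uses that $S$ is a complex algebraic subvariety of a complex projective line; I expect this to be essentially the only obstacle, the rest being bookkeeping.

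It then remains to match each shape of $S$ to a class. Let $\rho$ be any state supported on $\mathcal{H}$ (i.e.\ in the Bloch ball of $\mathcal{H}$). By Corollary~\ref{corollary_decomp}, every decomposition of $\rho$ uses only pure states of $\mathcal{H}$, so if $\rho$ is separable it is a convex combination of product states lying in $S$; conversely, any convex combination of elements of $S$ is, by the definition of separability, a separable state on the Bloch sphere. If $S=\mathbb{CP}^1$, every pure state on the sphere is a product state, hence every $\rho$ --- a mixture of such --- is separable: Class~\ref{class_sep}. If $S=\{\ket{\phi},\ket{\chi}\}$ has two points, the chord $\{p\ket{\phi}\bra{\phi}+(1-p)\ket{\chi}\bra{\chi}:0\le p\le1\}$ consists of separable states and, by the remark above, is exactly the separable part of the Bloch ball; whether $\braket{\phi}{\chi}=0$ or not, this is Class~\ref{class_line_ortho} or Class~\ref{class_line_nonortho}. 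If $|S|=1$, then by Corollary~\ref{corollary_separable} a separable \emph{rank-two} state would require two distinct product states on the sphere, which is impossible, so the single element of $S$ is the only separable state: Class~\ref{class_point}. Finally, if $S=\varnothing$, the same argument leaves no separable state at all: Class~\ref{class_entang}. Since $\mathcal{H}$ falls into exactly one of these cases, the theorem follows.
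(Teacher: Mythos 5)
Your proof is correct, and it takes a genuinely different route from the paper's. The paper argues via the partial transpose: once Corollary~\ref{corollary_separable} yields two product states $\ket{\psi_A}\otimes\ket{\psi_B}$ and $\ket{\phi_A}\otimes\ket{\phi_B}$ on the sphere, it splits into cases according to whether the local factors coincide and applies the weaker Peres criterion (Lemma~\ref{PeresWeaker}) to show that every state of the form \eqref{class2_state} with $a_{10}\neq 0$ is entangled (indeed has a non-positive partial transpose), so that the separable set is exactly the connecting line. You never touch the partial transpose: you identify the set $S$ of product pure states in $\mathbb{P}(\mathcal{H})\cong\mathbb{CP}^1$ as the common zero locus of the $2\times 2$ minors of the pencil $\alpha M_0+\beta M_1$ --- a linear system of binary quadratic forms --- and correctly conclude that $S$ is all of $\mathbb{CP}^1$, two points, one point, or empty (two non-proportional binary quadratics over $\mathbb{C}$ share at most one projective root, and a single nonzero one has one or two roots); then Lemma~\ref{lemma_decomp} forces every separable state in the ball to be a convex combination of elements of $S$, so the separable set is precisely the convex hull of $S$, which is the ball, a chord, a point, or empty. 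Both arguments are sound. Your approach buys a cleaner structural picture (the trichotomy is the intersection pattern of a projective line with the Segre variety, rather than something discovered case by case) and it establishes in one stroke that the two-product-state case gives \emph{exactly} the chord. The paper's approach buys elementarity, the slightly stronger conclusion that the entangled states are NPT, and reusability: the Lemma~\ref{PeresWeaker} computation is invoked verbatim in the multipartite Theorem~\ref{central_theorem_multipartite}, whereas your minor-vanishing description of $S$ would have to be re-derived for fully product states there.
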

(We note that Class~\ref{class_line_ortho}
and Class~\ref{class_line_nonortho} are discussed together,
because in both of them the Bloch ball has
just one line of separable states.)

\begin{proof}
First, assume that there is no separable
\emph{mixed} state inside the Bloch ball.
This means that there is at most one
pure separable state on the Bloch sphere
(because if two pure states are separable,
then the line connecting them inside the Bloch ball is separable, too).
This matches Classes~\ref{class_point} and \ref{class_entang}.

Now assume that there is a separable
\emph{mixed} state $\rho$ inside the Bloch ball.
According to Corollary~\ref{corollary_separable}
(the ``separable states condition''),
this means that there are at least two
different pure separable states on the Bloch sphere.
We denote them by $\ket{\psi} = \ket{\psi_A} \otimes \ket{\psi_B}$ and
$\ket{\phi} = \ket{\phi_A} \otimes \ket{\phi_B}$.

We note that $\ket{\psi} \ncong \ket{\phi}$
(defining the symbol $\cong$ to be ``equality as normalized states,
possibly with different global phases''; thus, the symbol $\ncong$
means that the two normalized states are really different,
as opposed to states that are equal up to a global phase),
which means that $\ket{\psi}$ and $\ket{\phi}$ are linearly independent.
Therefore, the Bloch sphere represents the $2$-dimensional subspace
$\text{Span}\{\ket{\psi}, \ket{\phi}\}$, which means that
all the mixed states inside the Bloch ball are of the form:
\begin{equation}
\rho = a_{00} \ket{\psi} \bra{\psi} + a_{01} \ket{\psi} \bra{\phi}
+ a_{10} \ket{\phi} \bra{\psi} + a_{11} \ket{\phi} \bra{\phi}
\label{class2_state}
\end{equation}

If $\ket{\psi_A} \cong \ket{\phi_A}$ or
$\ket{\psi_B} \cong \ket{\phi_B}$,
then obviously all the states on and inside the Bloch sphere
are separable, which matches Class~\ref{class_sep}.

If $\ket{\psi_A} \ncong \ket{\phi_A}$
and $\ket{\psi_B} \ncong \ket{\phi_B}$,
then we prove that only the line connecting
$\ket{\psi}$ and $\ket{\phi}$ inside the Bloch ball is separable,
and that all the other pure and mixed states in the Bloch ball
are entangled. This will match
Classes~\ref{class_line_ortho}+\ref{class_line_nonortho},
and will conclude our proof.

We look at all the mixed states of the form
\eqref{class2_state}.
If $a_{01} = a_{10} = 0$, then we obviously get a separable state:
\begin{equation*}
\rho = a_{00} \ket{\psi_A}\bra{\psi_A}
\otimes \ket{\psi_B}\bra{\psi_B}
+ a_{11} \ket{\phi_A}\bra{\phi_A} \otimes \ket{\phi_B}\bra{\phi_B}
\end{equation*}

If $a_{10} \ne 0$, then: let $\ket{\overline{\phi_A}} \in \mathcal{H}_A$
satisfy $\braket{\phi_A}{\overline{\phi_A}} = 0$ and
$\braket{\psi_A}{\overline{\phi_A}} \ne 0$
($\ket{\overline{\phi_A}}$ always exists,
because $\ket{\psi_A} \ncong \ket{\phi_A}$).
Similarly, let $\ket{\overline{\psi_A}} \in \mathcal{H}_A$ and
$\ket{\overline{\phi_B}}, \ket{\overline{\psi_B}} \in \mathcal{H}_B$
satisfy similar properties (because $\ket{\psi_B} \ncong \ket{\phi_B}$).
Then
\begin{equation*}
\bra{\overline{\psi_A} \; \overline{\phi_B}} \rho
\ket{\overline{\psi_A} \; \overline{\phi_B}} = 0,
\end{equation*}
and
\begin{eqnarray*}
\bra{\overline{\psi_A} \; \overline{\psi_B}} \rho
\ket{\overline{\phi_A} \; \overline{\phi_B}}
&=& a_{10} \braket{\overline{\psi_A}
\; \overline{\psi_B}}{\phi_A \phi_B}
\braket{\psi_A \psi_B}{\overline{\phi_A}
\; \overline{\phi_B}} \\
&=& a_{10} \braket{\overline{\psi_A}}{\phi_A}
\braket{\overline{\psi_B}}{\phi_B}
\braket{\psi_A}{\overline{\phi_A}}
\braket{\psi_B}{\overline{\phi_B}} \\
&\ne& 0.
\end{eqnarray*}

Therefore, by Lemma~\ref{PeresWeaker}
(the ``weaker entanglement criterion''),
if $a_{10} \ne 0$ (or $a_{01} \ne 0$
-- this is equivalent, because $a_{01} = a_{10}^*$),
then $\rho$ is entangled.

We conclude that only the line between $\ket{\psi}$ and $\ket{\phi}$
(i.e., the line of states satisfying $a_{01} = a_{10} = 0$)
is separable, and that the other states (i.e., the states satisfying
$a_{10} \ne 0$ or $a_{01} \ne 0$) are entangled, which matches
Classes~\ref{class_line_ortho}+\ref{class_line_nonortho}.
This concludes our proof.
\end{proof}

\section{\label{two_qubit_proof} A proof that Class~\ref{class_entang}
does not exist in the two-qubit case}
We have seen that for almost all the bipartite Hilbert spaces,
five classes appear. We now show that for the Hilbert space
$\mathcal{H}_2 \otimes \mathcal{H}_2$, only four classes exist
(Classes~\ref{class_sep}-\ref{class_point}):

\begin{proposition}\label{prop_two_qubit}
No $2$-dimensional subspace of $\mathcal{H}_2\otimes\mathcal{H}_2$
is completely entangled.
\end{proposition}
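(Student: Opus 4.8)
\emph{Proof proposal.}
The plan is to show that \emph{every} $2$-dimensional subspace $\mathcal{H} \subseteq \mathcal{H}_2 \otimes \mathcal{H}_2$ contains at least one pure product state; since a completely entangled Bloch ball would contain no separable state whatsoever, this immediately contradicts membership in Class~\ref{class_entang}. First I would fix any basis $\{\ket{\psi_0},\ket{\psi_1}\}$ of $\mathcal{H}$ and write a general element as $\ket{\psi} = \alpha\ket{\psi_0} + \beta\ket{\psi_1} = \sum_{i,j\in\{0,1\}} \epsilon_{ij}(\alpha,\beta)\,\ket{i}\ket{j}$, noting that each coefficient $\epsilon_{ij}(\alpha,\beta)$ is a fixed linear form in the complex variables $\alpha,\beta$.

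Next I would invoke the standard Schmidt-rank criterion for two qubits, exactly as used in the proof of Proposition~\ref{prop_class5_example}: collecting the coefficients into the $2\times 2$ matrix $E(\alpha,\beta) = \big(\epsilon_{ij}(\alpha,\beta)\big)$, the state $\ket{\psi}$ is a product state $\big(a_0\ket{0}+a_1\ket{1}\big)\otimes\big(b_0\ket{0}+b_1\ket{1}\big)$ iff $\epsilon_{ij} = a_i b_j$ for all $i,j$, i.e.\ iff $E(\alpha,\beta)$ has rank at most $1$, i.e.\ iff $\det E(\alpha,\beta) = \epsilon_{00}\epsilon_{11} - \epsilon_{01}\epsilon_{10} = 0$.

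The key observation is that $\det E(\alpha,\beta)$, being the determinant of a $2\times 2$ matrix whose entries are linear in $(\alpha,\beta)$, is a homogeneous polynomial of degree $2$ in $\alpha,\beta$ with complex coefficients. If it vanishes identically, then every state in $\mathcal{H}$ is separable and we are in the situation of Class~\ref{class_sep}. Otherwise it is a nonzero homogeneous quadratic $c_1\alpha^2 + c_2\alpha\beta + c_3\beta^2$, which over $\mathbb{C}$ factors into two linear forms and hence vanishes at some $(\alpha_0,\beta_0)\neq(0,0)$. In either case there is a nonzero $(\alpha_0,\beta_0)$ with $\det E(\alpha_0,\beta_0)=0$, so $\ket{\psi} = \alpha_0\ket{\psi_0}+\beta_0\ket{\psi_1}$ is a nonzero (hence, after normalization, genuine) pure product state lying in $\mathcal{H}$. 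Consequently the Bloch ball of $\mathcal{H}$ contains a separable pure state and cannot be completely entangled.

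I do not anticipate a real obstacle: the argument rests only on the elementary facts that separability of a two-qubit pure state is detected by a single $2\times 2$ determinant and that a homogeneous quadratic in two variables always has a complex root. The one point needing a moment's care is the degenerate case in which the determinant is identically zero, but that case only makes the conclusion easier (the whole subspace is then separable). It is worth remarking why this fails beyond two qubits: in $\mathcal{H}_2\otimes\mathcal{H}_3$ the coefficients of $\ket\psi$ form a $2\times 3$ matrix, and demanding that \emph{all} its $2\times 2$ minors vanish is an overdetermined system of homogeneous quadratics in $(\alpha,\beta)$ that generically has no nonzero solution — which is precisely why Class~\ref{class_entang} does occur there, as witnessed by Proposition~\ref{prop_class5_example}.
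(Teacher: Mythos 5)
Your proof is correct and is essentially the paper's own argument: the determinant $\epsilon_{00}\epsilon_{11}-\epsilon_{01}\epsilon_{10}$ you use is exactly (half, and without the modulus) the concurrence the paper invokes, and in both cases the key step is that a degree-$2$ polynomial over $\mathbb{C}$ always has a root. Your homogeneous formulation in $(\alpha,\beta)$ handles the degenerate cases slightly more uniformly than the paper's affine parametrization $\ket{\psi_0}+z\ket{\psi_1}$ (which must first set aside the case $C(\psi_1)=0$), but the substance is the same.
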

\begin{proof}
This proof follows the methods of~\cite{Osterloh2008}.
We remember that for a two-qubit state
$\ket{\psi} = \sum_{i,j} a_{ij} \ket{i}\ket{j}$, the concurrence $C$ is
defined as follows \cite{HW1997,Wootters2001}:
\begin{equation}
C(\psi) = 2 |a_{00} a_{11} - a_{01} a_{10}|
\end{equation}
In particular, $C(\psi) = 0$ if and only if $\ket{\psi}$ is separable.
(This is not necessarily true for other entanglement measures.)

Let $\mathcal{H} \triangleq \text{Span}\{\ket{\psi_0}, \ket{\psi_1}\}$
be a $2$-dimensional subspace of $\mathcal{H}_2 \otimes \mathcal{H}_2$.
We may assume that $C(\psi_1) \ne 0$ (otherwise, $\ket{\psi_1}$ is
separable, hence $\mathcal{H}$ cannot be completely entangled).
Therefore, the set of separable (non-normalized) pure states
in $\mathcal{H}$ is the set of states $\ket{\psi_0} + z \ket{\psi_1}$
satisfying the following equation:
\begin{equation*}
C(\ket{\psi_0} + z \ket{\psi_1}) = 0
\end{equation*}

This is a quadratic equation in the complex variable $z$
(because we may ignore the absolute value).
The absolute value of the coefficient of $z^2$ is $C(\psi_1) \ne 0$.
Therefore, there are two solutions $\xi_1, \xi_2$
(possibly equal) to this equation,
and thus the non-normalized state $\ket{\psi_0} + \xi_1 \ket{\psi_1}$
(whose normalization is in $\mathcal{H}$) must be separable.
Therefore, there is a separable state in $\mathcal{H}$,
and $\mathcal{H}$ cannot be completely entangled.
\end{proof}

\section{\label{multipartite_ent} Examples and analysis of
multipartite entanglement}
For multipartite states, there are several different definitions of
separability and entanglement: an $m$-partite mixed state is
``fully separable'' if it is a mixture of pure states that are products
of $m$ pure states; and it is ``separable with respect to a bipartite
partition $\mathcal{P}$'' (with $\mathcal{P}$ partitioning
the $m$ subsystems into two disjoint sets)
if the bipartite state corresponding to the partition
$\mathcal{P}$ is separable~\cite{Horodecki2009}.
For example, the state $\ket{0}_A \ket{\Phi_+}_{BC} \in
\mathcal{H}_A \otimes \mathcal{H}_B \otimes \mathcal{H}_C$
is separable with respect to the partition $\{\{1\}, \{2, 3\}\}$,
but is entangled with respect to both partitions $\{\{1, 2\}, \{3\}\}$
and $\{\{1, 3\}, \{2\}\}$.
Note that even if a state is separable with respect to all
the bipartite partitions, it may still be entangled
(i.e., not fully separable) \cite{BDMSST1999}.

To illustrate the many existing possibilities for Bloch spheres
in the multipartite case, we look at two examples:
\begin{enumerate}
\item $\text{Span} \{\ket{000}, \ket{111}\}$: the line connecting
between the north pole ($\ket{000}$) and the south pole ($\ket{111}$)
is fully separable; all the other points are entangled
with respect to \emph{any} bipartite partition.
\item $\text{Span} \{\ket{000}, \ket{011}\}$: the line connecting
between the north pole ($\ket{000}$) and the south pole ($\ket{011}$)
is fully separable; all the other points are separable with respect to
the bipartite partition $\{\{1\}, \{2,3\}\}$, but are entangled
with respect to the partitions $\{\{1,2\}, \{3\}\}$
and $\{\{1,3\}, \{2\}\}$.
\end{enumerate}

The proofs of separability above are direct from the definitions;
and the proofs of entanglement are implied by our analysis
in the proof of Theorem~\ref{central_theorem}.

Moreover, our Theorem~\ref{central_theorem} is true also for the
set of \emph{fully separable} states in the multipartite case:

\begin{theorem}\label{central_theorem_multipartite}
Let $\mathcal{H}$ be a $2$-dimensional subspace of
$\mathcal{H}_{A_1} \otimes \cdots \otimes \mathcal{H}_{A_m}$,
where $\mathcal{H}_{A_1}, \dots, \mathcal{H}_{A_m}$ are Hilbert spaces.
Then $\mathcal{H}$ belongs to one of the following classes:
\begin{description}
	\item[Class~\ref{class_sep}] \underline{All} the states inside
	the Bloch ball of $\mathcal{H}$ are \underline{fully separable}.
	\item[Classes~\ref{class_line_ortho}+\ref{class_line_nonortho}]
	The Bloch ball of $\mathcal{H}$ has
	\underline{one line} of fully separable states,
	and all the other states are not fully separable.
	\item[Class~\ref{class_point}] The Bloch ball of $\mathcal{H}$ has
	\underline{one fully separable point} (pure state),
	and all the other states are not fully separable.
	\item[Class~\ref{class_entang}] \underline{All} the states inside
	the Bloch ball of $\mathcal{H}$ are \underline{not fully separable}.
\end{description}
\end{theorem}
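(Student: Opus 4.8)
The plan is to reduce Theorem~\ref{central_theorem_multipartite} to the bipartite case already proved in Theorem~\ref{central_theorem}, using the fact that full separability of a rank-2 mixed state is governed by the full separability of the pure states on its Bloch sphere. First I would observe that, exactly as in Corollary~\ref{corollary_separable}, if a rank-2 mixed state $\rho$ inside the Bloch ball is fully separable, then by Lemma~\ref{lemma_decomp} every pure state in every decomposition of $\rho$ lies on the same Bloch sphere, and in particular $\rho$ has a decomposition into fully separable pure states all lying on the Bloch sphere; since a single pure state cannot be a nontrivial mixture of itself alone, this forces at least two distinct fully separable pure states $\ket{\psi}$ and $\ket{\phi}$ on the sphere. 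Conversely, if two distinct fully separable pure states lie on the sphere, every convex combination along the segment joining them is fully separable. So the structure of the fully separable set is again dictated by how many fully separable pure states sit on the Bloch sphere: zero or one (matching Classes~\ref{class_point} and \ref{class_entang}), or at least two (matching Classes~\ref{class_sep}, \ref{class_line_ortho}, \ref{class_line_nonortho}).

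The key step is the two-or-more case. I would write the Bloch sphere as $\text{Span}\{\ket{\psi},\ket{\phi}\}$ with $\ket{\psi} = \bigotimes_{k} \ket{\psi_{A_k}}$ and $\ket{\phi} = \bigotimes_{k} \ket{\phi_{A_k}}$ fully product, and every state in the ball of the form~\eqref{class2_state}. If $\ket{\psi_{A_k}} \cong \ket{\phi_{A_k}}$ for \emph{all} $k$, then $\ket{\psi} \cong \ket{\phi}$, contradicting distinctness; so there is at least one index where they differ. If there is exactly one such index, then grouping that factor as subsystem $B$ and all the others as subsystem $A$ makes the subspace a product $\ket{\text{common}}_A \otimes (\text{2-dim subspace of } \mathcal{H}_{A_k})$, every state of which is fully separable --- this is Class~\ref{class_sep}. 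Otherwise there are at least two indices $k$ with $\ket{\psi_{A_k}} \ncong \ket{\phi_{A_k}}$. In that case I would show that any $\rho$ of the form~\eqref{class2_state} with $a_{10} \ne 0$ (equivalently $a_{01}\ne 0$) is entangled with respect to the bipartition that puts one such index on side $A$ and another on side $B$: choosing $\ket{\overline{\psi_{A}}}, \ket{\overline{\phi_{A}}}$ in the span of the two relevant single-party states exactly as in the proof of Theorem~\ref{central_theorem}, and taking the remaining tensor factors to be arbitrary fixed vectors, one gets $\bra{\overline{\psi_A}\,\overline{\phi_B}}\rho\ket{\overline{\psi_A}\,\overline{\phi_B}} = 0$ while $\bra{\overline{\psi_A}\,\overline{\psi_B}}\rho\ket{\overline{\phi_A}\,\overline{\phi_B}} = a_{10}\cdot(\text{nonzero product of inner products}) \ne 0$, so Lemma~\ref{PeresWeaker} gives bipartite entanglement, hence a fortiori non-full-separability. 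Thus only the line $a_{01}=a_{10}=0$ is fully separable, matching Classes~\ref{class_line_ortho}+\ref{class_line_nonortho}.

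For the zero-or-one case, I would note directly that if at most one pure state on the sphere is fully separable then, by the converse observation above, no mixed state inside the ball can be fully separable (a fully separable mixed state would require two distinct fully separable pure states on the sphere); whether the sphere has exactly one such point (Class~\ref{class_point}) or none (Class~\ref{class_entang}) then distinguishes the two remaining classes, and both are realized --- e.g.\ $\text{Span}\{\ket{00\cdots0},\,\alpha\ket{0\cdots01}+\beta\ket{0\cdots010}\}$ for Class~\ref{class_point} by the argument of Proposition~\ref{prop_class4_example}, and a completely entangled subspace (e.g.\ the one of Proposition~\ref{prop_class5_example} embedded with trivial extra factors, or a genuinely multipartite completely entangled subspace) for Class~\ref{class_entang}.

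The main obstacle I anticipate is the bookkeeping in the ``two or more differing indices'' subcase: one must verify that the four auxiliary vectors can always be chosen so that \emph{all} the relevant inner products are simultaneously nonzero (on sides $A$ and $B$) while the vanishing expectation still holds, and that the choice of which differing index goes on which side is legitimate. This is essentially the same computation as in Theorem~\ref{central_theorem} applied to a coarse-grained bipartition, so it should go through cleanly, but it is the step where a careless grouping could break the argument. A secondary point requiring a line of care is justifying that a rank-2 state admitting a fully separable decomposition admits one \emph{supported on its Bloch sphere} --- this is immediate from Lemma~\ref{lemma_decomp}, since the pure states of \emph{any} decomposition, in particular a fully separable one, must lie in the 2-dimensional subspace.
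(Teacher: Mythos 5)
Your proposal is correct and follows essentially the same route as the paper: use the Corollary~\ref{corollary_separable}-type argument to reduce to counting fully separable pure states on the sphere, handle the at-most-one case as Classes~\ref{class_point}/\ref{class_entang}, treat the ``differ in at most one tensor factor'' case as Class~\ref{class_sep}, and otherwise choose a bipartition separating two differing factors and invoke the Lemma~\ref{PeresWeaker} computation from Theorem~\ref{central_theorem} to kill every state with $a_{10}\neq 0$. The only cosmetic difference is that the paper picks the explicit contiguous partition $I_1=\{1,\dots,i_1\}$, $I_2=\{i_1+1,\dots,m\}$ and sidesteps your worry about the auxiliary vectors by choosing them in the whole coarse-grained factor $\bigotimes_{k\in I_j}\mathcal{H}_{A_k}$ (where only $\ket{\psi^{I_j}}\ncong\ket{\phi^{I_j}}$ is needed) rather than as tensor products.
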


\begin{proof}
First, assume that there is no fully separable
\emph{mixed} state inside the Bloch ball.
This means that there is at most one
pure fully-separable state on the Bloch sphere
(because if two pure states are fully separable,
then the line connecting them inside
the Bloch ball is fully separable, too).
This matches Classes~\ref{class_point} and~\ref{class_entang}.

Now assume that there is a fully separable
\emph{mixed} state $\rho$ inside the Bloch ball.
According to Corollary~\ref{corollary_separable}
(the ``separable states condition''), this means
that there are at least two different fully separable
pure states on the Bloch sphere. We denote them by
$\ket{\psi} = \ket{\psi_{A_1}} \otimes \cdots \otimes \ket{\psi_{A_m}}$
and
$\ket{\phi} = \ket{\phi_{A_1}} \otimes \cdots \otimes \ket{\phi_{A_m}}$.

We note that $\ket{\psi} \ncong \ket{\phi}$
(defining the symbol $\cong$ as we did in the proof of
Theorem~\ref{central_theorem} above; thus, the symbol
$\ncong$ means that the two normalized states are really different,
as opposed to states that are equal up to a global phase),
which means that $\ket{\psi}$ and $\ket{\phi}$ are linearly independent.
Therefore, the Bloch sphere represents the $2$-dimensional subspace
$\text{Span}\{\ket{\psi}, \ket{\phi}\}$, which means that
all the mixed states inside the Bloch ball are of the form:
\begin{equation}
	\rho = a_{00} \ket{\psi} \bra{\psi} + a_{01} \ket{\psi} \bra{\phi}
	+ a_{10} \ket{\phi} \bra{\psi} + a_{11} \ket{\phi} \bra{\phi}
\end{equation}

If $\ket{\psi_{A_i}} \cong \ket{\phi_{A_i}}$ for all $i$
\underline{except} one value of $i$, then obviously all the states
on and inside the Bloch sphere are fully separable,
which matches Class~\ref{class_sep}.

If $\ket{\psi_{A_{i_1}}} \ncong \ket{\phi_{A_{i_1}}}$ and
$\ket{\psi_{A_{i_2}}} \ncong \ket{\phi_{A_{i_2}}}$ for $i_1 < i_2$,
then we prove that for the bipartite partition $\{I_1, I_2\}$
with $I_1 = \{1, \dots, i_1\}$ and $I_2 = \{i_1 + 1, \dots, m\}$
(satisfying $I_1 \cup I_2 = \{1, \dots, m\}$,
$I_1 \cap I_2 = \emptyset$, $i_1 \in I_1$, and $i_2 \in I_2$),
it holds that only the line connecting $\ket{\psi}$ and $\ket{\phi}$
inside the Bloch ball is fully separable,
and that all the other pure and mixed states in the Bloch ball
are entangled with respect to the partition $\{I_1, I_2\}$.
This will match Classes~\ref{class_line_ortho}+\ref{class_line_nonortho},
and will conclude our proof.

To prove that the line is fully separable, we notice that
any convex combination of fully separable states is fully separable,
and therefore the line connecting $\ket{\psi}$ and
$\ket{\phi}$ inside the Bloch ball is fully separable.

To prove that all the other states are entangled with respect to
the partition $\{I_1, I_2\}$, we denote $\ket{\psi^{I_1}}
= \ket{\psi_{A_1}} \otimes \cdots \otimes \ket{\psi_{A_{i_1}}}$
and $\ket{\psi^{I_2}} = \ket{\psi_{A_{i_1 + 1}}} \otimes
\cdots \otimes \ket{\psi_{A_m}}$; and similarly, we define
$\ket{\phi^{I_1}}$ and $\ket{\phi^{I_2}}$.
Then, because $i_1 < i_2$, and because
$\ket{\psi_{A_{i_1}}} \ncong \ket{\phi_{A_{i_1}}}$ and
$\ket{\psi_{A_{i_2}}} \ncong \ket{\phi_{A_{i_2}}}$,
it must hold that $\ket{\psi^{I_1}} \ncong \ket{\phi^{I_1}}$ and
$\ket{\psi^{I_2}} \ncong \ket{\phi^{I_2}}$. It also holds that
$\ket{\psi} = \ket{\psi^{I_1}} \otimes \ket{\psi^{I_2}}$ and
$\ket{\phi} = \ket{\phi^{I_1}} \otimes \ket{\phi^{I_2}}$; therefore,
according to the proof of the original Theorem~\ref{central_theorem},
it holds that all the states outside of the line
connecting $\ket{\psi}$ and $\ket{\phi}$ in the Bloch ball
(i.e., all the states satisfying $a_{01} \ne 0$ or $a_{10} \ne 0$)
are entangled with respect to the partition $\{I_1, I_2\}$.
Together with the proof that all the states on that line (i.e.,
all the states satisfying $a_{01} = a_{10} = 0$) are fully separable,
this matches Classes~\ref{class_line_ortho}+\ref{class_line_nonortho},
and concludes our proof.
\end{proof}

Extensions of Theorem~\ref{central_theorem} to other cases of
multipartite entanglement are beyond the scope of this paper.

\section{\label{prev_works} Previous works}
The existence of completely entangled subspaces has been discussed
in many papers before. In particular, this notion was used
in~\cite{BDMSST1999} to prove the existence of a huge class
of bound entangled states.

Analysis of entangled states in a Hilbert subspace,
using \emph{specific} entanglement measures
(e.g., the concurrence and the 3-tangle) and Bloch spheres,
was done by~\cite{Lohmayer2006} and~\cite{Osterloh2008}.
However, the entanglement measures they choose usually vanish
not only for all the separable states,
but also for some of the entangled states~\cite{Osterloh2008}.
Much more recently,~\cite{Regula2016} and~\cite{Regula2016_ii}
investigated interesting classes in the \emph{same} research direction.
In contrast, our paper analyzes the separability and the entanglement
in the Bloch sphere for any rank-2 bipartite state;
and, instead of using a specific entanglement measure
that \emph{cannot show the entanglement}
of some of the entangled states,
we fully characterize the set of separable states
on and inside the state's Bloch sphere.

\section{\label{conclusion} Conclusion}
We have found a complete classification of the possible
sets of separable states in all the $2$-dimensional subspaces
of bipartite Hilbert spaces.
Our result is general and is not limited to specific entanglement
measures or to specific bipartite spaces, but applies to all
the bipartite Hilbert spaces, and extends to the sets of
fully separable states in multipartite spaces.
Moreover, the result makes it possible to define natural measures
that vanish exactly on the separable states.

It may be possible to extend our results into higher-rank mixed states:
for example, it is possible to look at ``portions''
of the higher-rank states (e.g., a non-degenerate rank-3 state
defines three Bloch spheres, each corresponding to
two out of the three eigenstates);
and it is possible to analyze higher-rank states that are
$\epsilon$-close ($\epsilon \ll 1$) to rank-2 states.

Our analysis identifies the set of ``Bloch-sphere neighbor states''
of any rank-2 state (namely, the set of states in its Bloch sphere).
Such Bloch-sphere neighbor states may be useful for various protocols:
for example, entanglement purification or error correction protocols
may first turn the state into a Bloch-sphere neighbor state of
desired properties (e.g., more entangled), and then operate on that
Bloch-sphere neighbor state. Those possibilities may be explored
by future research.

\begin{acknowledgments}
We thank Ajit Iqbal Singh for telling us that Class~\ref{class_point}
we present was previously shown (independently of our result)
in~\cite{Regula2016}.
The work of TM and RL was partly supported
by the Israeli MOD Research and Technology Unit.
\end{acknowledgments}

\bibliography{Geometry}

\end{document}